\newcommand{\ket}[1]{|#1\rangle}
\newcommand{\bra}[1]{\langle #1 |}
\begin{document}
\hypersetup{pdftitle={Coding Theoretic Construction of Quantum Ramp Secret Sharing},pdfauthor={Ryutaroh Matsumoto},pdfkeywords={algebraic geometry code, non-perfect secret sharing, quantum secret sharing, ramp secret sharing}}
%\journalname{Journal of Cryptology}
\title{Coding Theoretic Construction of Quantum Ramp Secret Sharing}
\author{Ryutaroh Matsumoto}
\date{December 2014}
\institute{Ryutaroh Matsumoto \at Department of Communications and Computer Engineering, Tokyo Institute of Technology, 152-8550 Japan\\
and Department of Mathematical Sciences,
Aalborg University, Denmark\\
              ORCID: 0000-0002-5085-8879 \\
              \email{ryutaroh@it.ce.titech.ac.jp}}
\maketitle
\begin{abstract}
We show a construction of a quantum ramp secret sharing scheme
from a nested pair of linear codes.
Necessary and sufficient conditions for qualified sets and forbidden sets
are given in terms of combinatorial properties of nested linear codes.
An algebraic geometric construction for quantum secret sharing
is also given.
\keywords{algebraic geometry code \and non-perfect secret sharing \and quantum secret sharing \and ramp secret sharing}
\PACS{03.67.Dd}
\subclass{81P94 \and 94A62 \and 94B27}
\CRclass{E.3}
\end{abstract}

\section{Introduction}
Secret sharing (SS) \cite{shamir79} is a cryptographic scheme to
encode a secret to multiple shares being distributed to
participants, so that only qualified (or authorized)
sets of participants
can reconstruct the original secret from their shares.
Traditionally both secret and shares were classical information
(bits). Several authors \cite{cleve99,gottesman00,smith00}
extended the traditional SS to quantum one
so that a quantum secret can be encoded to quantum shares.

When we require unqualified sets of participants
to have zero information of the secret,
the size of each share must be larger than or equal to
that of secret.
By tolerating partial information leakage to
unqualified sets, the size of shares can be smaller
than that of secret. Such an SS is called a ramp (or non-perfect) SS
\cite{blakley85,ogata93,yamamoto86}. The quantum ramp SS was
proposed by Ogawa et al.\ \cite{ogawa05}.
In their construction \cite{ogawa05} as well as
its improvement \cite{matsumoto14strong},
the size of shares can be $L$ times smaller relative to quantum secret
than its previous construction \cite{cleve99,gottesman00,smith00},
where $L$ is the number of qudits in quantum secret.

In their construction \cite{ogawa05}, each share is a quantum state
on a $q$-dimensional complex linear space,
and $q$ has to be larger than or equal to the number $n$
of participants.
When $n$ is large, $q$ also has to be large.
But it is not clear whether or not such a large
dimensional quantum systems are always readily available.
To deal with such a situation, we need a quantum ramp SS
allowing $n > q$.
We stress that we study the ramp (non-perfect) SS
while \cite{cleve99,gottesman00,smith00} and their
subsequent developments \cite{marin13,markham08} studied the perfect
SS, and that none of the results in this paper are
contained in \cite{cleve99,gottesman00,markham08,sarvepalli12,smith00}.

On the other hand,
the present paper can be regarded as a generalization of
\cite{gottesman00,sarvepalli12}.
Because \cite{gottesman00,sarvepalli12} studied
connection between perfect quantum SS and
the Calderbank-Shor-Steane (CSS) quantum error-correcting codes
\cite{calderbank96,steane96},
while our proposed encoding (\ref{eq1}) of quantum secret
into quantum shares is the same as
that of the $q$-ary CSS codes.
The connection between quantum \emph{ramp} SS and quantum error
correction seems first studied in \cite{marin13}.
Our new contributions that are not given in \cite{marin13}
are (a) necessary and sufficient conditions for qualified sets
and forbidden sets
that can be easily checked by a digital computer,
(b) a quantum procedure partially reconstructing the quantum secret
by an intermediate set of shares,
and (c) a construction of quantum ramp SS that allows arbitrarily large
$n$ for a fixed $q$.
Item (a) completely characterizes the qualified and the forbidden
sets. Such a complete characterization
cannot be obtained by regarding the reconstruction
of quantum secret as the erasure decoding of quantum error-correcting
codes \cite{marin13}.
Item (b) above clarifies how much quantum information in the secret
can
be reconstructed by an intermediate set, which is a share set
neither
qualified nor forbidden (unauthorized).
We note that item (c) above does not contradict with
$q > \sqrt{(n+2)/2}$ \cite[Eq.\ (5)]{marin13}, because
\cite[Eq.\ (5)]{marin13} considered perfect quantum SS.

It is well-known that all linear classical ramp SS can be constructed from
a pair of linear codes $C_2 \subsetneq C_1 \subseteq \mathbf{F}_q^n$
\cite{chen07,cruz10},
where $\mathbf{F}_q$ is the finite field with $q$
elements.
Smith \cite{smith00} studied connection between
\emph{perfect} linear classical SS and \emph{perfect} quantum SS
by using the monotone span program that can
express any \emph{perfect} linear classical SS, but
he did not considered ramp SS.
We call a quantum state in a $q$-dimensional
system as a qudit.
In this paper we shall show the following.
\begin{theorem}\label{thm1}
Let $J \subseteq \{1$, \ldots, $n\}$
and $\overline{J} = \{1$, \ldots, $n\} \setminus J$.
For $\vec{x} = (x_1$, \ldots, $x_n) \in \mathbf{F}_q^n$
define $P_J(\vec{x}) =(x_i)_{i \in J}$.
We define $\widetilde{P}_J$ to be an
$\mathbf{F}_q$-linear map from $C_1/C_2$ to $P_J(C_1)/P_J(C_2)$ sending
$\vec{x} + C_2 \in C_1/C_2$ to $P_J(\vec{x}) + P_J(C_2) \in P_J(C_1)/P_J(C_2)$.
A quantum ramp SS can be constructed from \textbf{any}
$C_2 \subsetneq C_1 \subseteq \mathbf{F}_q^n$,
regardless of $n$ and $q$.
\begin{enumerate}
\item\label{l:encode} The constructed quantum SS encodes
a quantum secret of $(\dim C_1 - \dim C_2)$
qudits to $n$ shares. Each share is a qudit.
\item\label{l:decode}
A set $J$ of participants can reconstruct
\begin{equation}
\dim \widetilde{P}_J(\ker(\widetilde{P}_{\overline{J}}))
\label{eq401}
\end{equation}
qudits out of $(\dim C_1 - \dim C_2)$
qudits of the encoded quantum secret.
If 
\begin{equation}
\dim \widetilde{P}_J(\ker(\widetilde{P}_{\overline{J}}))
= \dim C_1 - \dim C_2
\label{eq402}
\end{equation}
then the set $J$ of participants can reconstruct the  secret
perfectly. This means that $J$ is a qualified set.
In this case $\overline{J}$ has no information of the secret,
which means that $\overline{J}$ is a forbidden (also called
unauthorized)
set.
\item\label{l:implication}
The condition (\ref{eq402}) is equivalent to both
\begin{eqnarray}
\dim P_J(C_1)-\dim P_J(C_2) &=& \dim C_1-\dim C_2 \textrm{ and}\label{eq101}\\ 
\dim P_{\overline{J}}(C_1) - \dim P_{\overline{J}}(C_2)&=& 0. \label{eq102}
\end{eqnarray}
Condition (\ref{eq102}) is equivalent to
\begin{equation}
\dim C_2^\perp \cap \ker(P_J) - \dim C_1^\perp \cap \ker(P_J) = 0. \label{eq403}
\end{equation}
\item\label{l:necessity} Both (\ref{eq101}) and (\ref{eq102})
are also a necessary condition for $J$ to be a qualified set.
\end{enumerate}
\end{theorem}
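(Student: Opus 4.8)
The plan is to treat the four items in increasing order of difficulty, isolating the purely linear‑algebraic content (items \ref{l:implication} and \ref{l:necessity}, together with the combinatorics underlying item \ref{l:decode}) from the genuinely quantum step. Item \ref{l:encode} is immediate: the encoding (\ref{eq1}) sends a logical computational basis indexed by the cosets $C_1/C_2$ to uniform superpositions over $C_2$‑cosets inside $\mathbf{F}_q^n$, so the logical space has dimension $q^{\dim C_1 - \dim C_2}$, i.e.\ $\dim C_1 - \dim C_2$ qudits, while each codeword is a length‑$n$ string over $\mathbf{F}_q$, i.e.\ $n$ qudits.

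For item \ref{l:decode} I would first unwind the two kernels. By definition $\widetilde P_{\overline J}(\vec x + C_2)=0$ exactly when the $C_2$‑coset of $\vec x$ contains a representative supported on $J$; thus $\ker\widetilde P_{\overline J}$ is the space of logical letters expressible using $J$‑coordinates alone, and its image $\widetilde P_J(\ker\widetilde P_{\overline J})$ counts how many of these stay distinguishable after restriction to $J$. The upper bound is structural: a logical letter outside $\ker\widetilde P_{\overline J}$ has every representative touching $\overline J$, so $\overline J$ retains information about it and, by monogamy, it cannot be recovered from $J$ alone, while a letter in $\ker\widetilde P_{\overline J}\cap\ker\widetilde P_J$ is indistinguishable on $J$; hence at most $\dim\widetilde P_J(\ker\widetilde P_{\overline J})$ qudits can be recovered. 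For achievability I would exhibit an explicit quantum operation on the shares in $J$: pick representatives supported on $J$ for a basis of $\ker\widetilde P_{\overline J}$, realise the associated $X$‑type logical operators with support in $J$, and — this is the crux — produce the conjugate $Z$‑type logical operators also supported on $J$; splitting the internal sum over $C_2$ in (\ref{eq1}) into its $J$‑ and $\overline J$‑parts, a coherent operation on $J$ disentangles precisely the cosets of $\widetilde P_J(\ker\widetilde P_{\overline J})$, recovering exactly that many qudits (\ref{eq401}). When this equals $\dim C_1-\dim C_2$, full recovery makes $J$ qualified and, by the equivalence below, $\overline J$ forbidden.

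Items \ref{l:implication} and \ref{l:necessity} are then linear algebra. For item \ref{l:implication}, since $\widetilde P_J$ is surjective onto $P_J(C_1)/P_J(C_2)$, the image $\widetilde P_J(\ker\widetilde P_{\overline J})$ attains dimension $\dim C_1-\dim C_2$ iff $\ker\widetilde P_{\overline J}$ is all of $C_1/C_2$ (forcing $\widetilde P_{\overline J}=0$, i.e.\ $P_{\overline J}(C_1)=P_{\overline J}(C_2)$, which is (\ref{eq102})) and $\widetilde P_J$ is injective (which, with surjectivity, yields (\ref{eq101})); the converse is the same count read backwards. The equivalence of (\ref{eq102}) and (\ref{eq403}) follows from the puncturing–shortening duality $\dim P_{\overline J}(C)=|\overline J|-\dim(C^\perp\cap\ker P_J)$ applied to $C=C_1$ and $C=C_2$ and subtracting. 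For item \ref{l:necessity}, (\ref{eq101}) is necessary because a qualified $J$ must in particular recover the secret's computational‑basis content, which forces $\widetilde P_J$ to separate all cosets of $C_2$ in $C_1$; and (\ref{eq102}) is necessary because a standard no‑cloning/monogamy argument shows a qualified $J$ compels $\overline J$ to be forbidden, and forbidden‑ness of $\overline J$ is precisely (\ref{eq102}) (equivalently, through (\ref{eq403}), the dual statement read on $J$).

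I expect the main obstacle to be the quantum achievability inside item \ref{l:decode}: showing that one operation on $J$ simultaneously recovers both the amplitude ($X$‑type) and the phase ($Z$‑type) information of the reconstructible logical qudits, with the count pinned down to $\dim\widetilde P_J(\ker\widetilde P_{\overline J})$. This is where the duality between the $C_1/C_2$ description of $X$‑logical operators and the $C_2^\perp/C_1^\perp$ description of $Z$‑logical operators must be invoked, so that a single combinatorial quantity governs both halves of the quantum information.
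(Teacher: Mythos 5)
Your overall architecture tracks the paper's: item \ref{l:encode} by counting cosets, item \ref{l:implication} by the observation that (\ref{eq402}) forces $\widetilde{P}_J$ to be an isomorphism and $\widetilde{P}_{\overline{J}}$ to be zero (plus puncturing/shortening duality, which is exactly Forney's second duality lemma used in the paper, for (\ref{eq403})), and item \ref{l:necessity} by reducing to ``$\overline{J}$ not forbidden $\Rightarrow$ $J$ not qualified'' for a pure-state scheme together with an information count on $J$. Those parts are sound, although for item \ref{l:necessity} your assertions that failure of (\ref{eq101}) prevents $J$ from separating basis secrets and that failure of (\ref{eq102}) leaks information to $\overline{J}$ both require the explicit partial-trace computation $\Gamma_J(\ket{\vec{s}}\bra{\vec{s}})$ showing that the reduced state on $J$ depends on $\vec{s}$ only through the coset $P_J(f(\vec{s}))+P_J(C_2)$; the paper does this via the Holevo-information calculation $K_J=\dim P_J(C_1)-\dim P_J(C_2)$, and without some version of that calculation your two claims are assertions, not proofs.

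The genuine gap is in the achievability half of item \ref{l:decode}, precisely at the step you yourself flag as ``the crux.'' You propose to realise $X$-type logical operators for a basis of $\ker\widetilde{P}_{\overline{J}}$ with support in $J$ and then to ``produce the conjugate $Z$-type logical operators also supported on $J$,'' but you give no argument that this is possible, nor that the number of logical qudits for which \emph{both} conjugate operators localise on $J$ equals $\dim\widetilde{P}_J(\ker\widetilde{P}_{\overline{J}})$. The $X$-side lives in $C_1/C_2$ and the $Z$-side in $C_2^\perp/C_1^\perp$, so your route needs a duality lemma identifying the quantity (\ref{eq401}) computed for $(C_1,C_2)$ with the analogous quantity for $(C_2^\perp,C_1^\perp)$, together with a pairing argument selecting conjugate representatives; none of this is supplied, and it is the mathematical heart of the theorem. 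The paper avoids the stabilizer bookkeeping entirely: it decomposes $C_1/C_2=W\oplus V\oplus\ker(\widetilde{P}_J)$ with $V\cong\widetilde{P}_J(\ker\widetilde{P}_{\overline{J}}))$, builds an explicit $\mathbf{F}_q$-linear isomorphism $g_4$ on $P_J(C_1)$ (from components $g_1,g_2,g_3$), and applies the induced permutation unitary to the $J$ shares, after which the state (\ref{eq7}) visibly factors off $\ket{\vec{s}_J}$. The delicate point there --- which is the coordinate-level analogue of your missing $Z$-operator step --- is arranging $g_3$ so that its value is determined by $\vec{s}_W$, $\vec{s}_K$ and $P_{\overline{J}}(\vec{x})$ \emph{independently of} $\vec{s}_J$; without that independence the residual register stays correlated with $\vec{s}_J$ in a basis-dependent way and the phase information of the recovered qudits is not actually extracted. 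Until you either prove the $X$/$Z$ localisation-and-counting lemma or switch to an explicit coset-respecting basis change as the paper does, item \ref{l:decode} is not established.
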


This paper is organized as follows:
Section \ref{sec2} proposes the encoding of secrets
and shows Item \ref{l:encode} in Theorem \ref{thm1}.
Section \ref{sec3} proposes the decoding of secrets and
it shows Items \ref{l:decode}
and \ref{l:implication} in Theorem \ref{thm1}.
Section \ref{sec11} proves Item \ref{l:necessity} in
Theorem \ref{thm1} by computing the Holevo information of
the set $J$. It also computes the coherent information as a byproduct.
Section \ref{sec4} shows that Theorem \ref{thm1}
completely characterizes the qualified and forbidden
sets of the quantum ramp SS by
Ogawa et al.\ \cite{ogawa05}.
Section \ref{sec5} gives an algebraic geometric (AG) construction.
A major benefit of the AG construction is that $n$ can become arbitrarily
large for a fixed $q$ \cite{bn:stichtenoth}.
Section \ref{sec6} gives concluding discussions.

\section{Encoding Secrets}\label{sec2}
We shall propose a construction of a quantum ramp SS
from a nested pair of linear codes
$C_2 \subsetneq C_1 \subseteq \mathbf{F}_q^n$.
Our proposal is a quantum
version of classical ramp SS
proposed by Chen et al.\ \cite[Section 4.2]{chen07}.
Let $\mathcal{G}_i$ and $\mathcal{H}_j$ be $q$-dimensional complex linear spaces.
We also assume that orthonormal bases of $\mathcal{G}_i$
and $\mathcal{H}_j$ are indexed by $\mathbf{F}_q$
as $\{\ket{s}\}_{s \in \mathbf{F}_q}$.
The quantum secret is $\dim C_1 - \dim C_2$ qudits on
$\bigotimes_{i=1}^{\dim C_1 - \dim C_2} \mathcal{G}_i$.
Fix an $\mathbf{F}_q$-linear isomorphism $f: \mathbf{F}_q^{\dim C_1 - \dim C_2}
\rightarrow C_1 / C_2$.
Also,
$\{ \ket{\vec{s}} \mid \vec{s} \in \mathbf{F}_q^{\dim C_1 - \dim C_2}\}$
is an orthonormal basis of $\bigotimes_{i=1}^{\dim C_1 - \dim C_2} \mathcal{G}_i$.
We shall encode a quantum secret to
$n$ qudits in $\bigotimes_{j=1}^n \mathcal{H}_j$
by a complex linear isometric embedding.
To specify such an embedding, it is enough to specify
the image of each basis state $\ket{\vec{s}} \in \bigotimes_{i=1}^{\dim C_1 - \dim C_2} \mathcal{G}_i$.
We encode $\ket{\vec{s}}$ to
\begin{equation}
\frac{1}{\sqrt{|C_2|}} \sum_{\vec{x} \in f(\vec{s})} \ket{\vec{x}} \in
\bigotimes_{j=1}^n \mathcal{H}_j. \label{eq1}
\end{equation}
We note that the proposed encoding (\ref{eq1}) is equivalent to
that of CSS codes \cite{calderbank96,steane96}.
Recall that by definition of $f$, $f(\vec{s})$ is a subset of $C_1$,
$f(\vec{s}) \cap f(\vec{s}_1) = \emptyset$ if $\vec{s} \neq \vec{s}_1$,
and $f(\vec{s})$ contains $|C_2|$ vectors. From these properties
we see that (\ref{eq1}) defines a complex linear isometric embedding.
The quantum system $\mathcal{H}_j$ is distributed to the $j$-th
participant.

\begin{example}\label{ex1}
We show a slightly modified variant of Ogawa et al.\ \cite{ogawa05}
as an example.
Let $q=7$, $n=5$, $L=3$,
$\alpha_1 = 3$, $\alpha_2 = 5$, $\alpha_3 = 6$,
$\alpha_4 = 1$, $\alpha_5=4$.
For $s_1$, $s_2$, $s_3 \in \mathbf{F}_7$,
$\ket{s_1 s_2s_3}$ is encoded to
\begin{equation}
\frac{1}{\sqrt{7}}\sum_{r\in \mathbf{F}_7}
\bigotimes_{j=1}^5 \ket{r + s_1 \alpha_j + s_2 \alpha_j^2 + s_3 \alpha_j^3}.\label{eq:ex1}
\end{equation}
This encoding can be described by
\begin{eqnarray*}
C_1 &=& \{ (r + s_1 \alpha_j + s_2 \alpha_j^2 + s_3 \alpha_j^3)_{j=1,\ldots, 5} \mid
r, s_1, s_2, s_3 \in \mathbf{F}_7\},\\
C_2 &=& \{ (r,r,r,r,r ) \mid
r \in \mathbf{F}_7\},\\
f(s_1, s_2, s_3) &=& \{ (r + s_1 \alpha_j + s_2 \alpha_j^2+ s_3 \alpha_j^3)_{j=1,\ldots, 5} \mid
r \in \mathbf{F}_7\}.
\end{eqnarray*}
\end{example}

\section{Decoding Secrets}\label{sec3}
\subsection{Preliminary Algebra}
In this subsection we show Item \ref{l:implication} in Theorem
\ref{thm1} in order to introduce the proposed decoding procedure.
The equivalence between (\ref{eq102}) and (\ref{eq403}) follows from
Forney's second duality lemma \cite[Lemma 7]{forney94}
and $\ker(P_J) = \{ (x_1$, \ldots, $x_n) \in \mathbf{F}_q^n \mid
x_i = 0$ if $i\in J \}$.

Equation (\ref{eq101}) is equivalent to $\widetilde{P}_J$ being an
isomorphism, and (\ref{eq102}) is equivalent to $\widetilde{P}_{\overline{J}}$
being the zero map. From
these observations we see that (\ref{eq101}) and (\ref{eq102}) imply
(\ref{eq402}) and vice versa.
This finishes the proof of Item \ref{l:implication} in Theorem \ref{thm1}.

\begin{remark}
Equation (\ref{eq403}) corresponds to \cite[Eq.\ (3)]{kurihara12}
for classical ramp SS.
\end{remark}

\subsection{Proposed Decoding Procedure}
Suppose that the quantum secret is
\begin{equation}
\sum_{\vec{s} \in \mathbf{F}_q^{\dim C_1 -\dim C_2}} \alpha(\vec{s})\ket{\vec{s}}
\in \bigotimes_{i=1}^{\dim C_1 - \dim C_2} \mathcal{G}_i.
\label{eq3}
\end{equation}
It is encoded to $n$ qudits as
\begin{equation}
\sum_{\vec{s} \in \mathbf{F}_q^{\dim C_1 -\dim C_2}} \alpha(\vec{s})
\frac{1}{\sqrt{|C_2|}} \sum_{\vec{x} \in f(\vec{s})} \ket{\vec{x}} \in
\bigotimes_{j=1}^n \mathcal{H}_j. \label{eq4}
\end{equation}
Decompose $\ker(\widetilde{P}_{\overline{J}})$ to a direct sum
$V \oplus (\ker(\widetilde{P}_{\overline{J}}) \cap \ker(\widetilde{P}_J))$,
and decompose
$C_1/C_2$ to $W \oplus V \oplus \cap \ker(\widetilde{P}_J)$.
Let $\mathcal{G}(J)$ to be the complex linear space spanned by
$\{\ket{\vec{s}} \mid f(\vec{s}) \in V\}$.
We have $\dim \mathcal{G}(J) = |\widetilde{P}_J(\ker(\widetilde{P}_{\overline{J}}))|$ because 
\begin{eqnarray}
&&\dim \widetilde{P}_J(\ker(\widetilde{P}_{\overline{J}}))\nonumber\\
&=& \dim \ker(\widetilde{P}_{\overline{J}}) - \dim \ker(\widetilde{P}_{\overline{J}}) \cap \ker(\widetilde{P}_J)\nonumber\\
&=&\dim V. \label{eq:vdim}
\end{eqnarray}
The space $\bigotimes_{i=1}^{\dim C_1 - \dim C_2} \mathcal{G}_i$
can be decomposed as $\mathcal{G}(J) \otimes \mathcal{G}_{\mathrm{rest}}$,
where $\mathcal{G}_{\mathrm{rest}}$ is the complex linear space
spanned by $\{ \ket{\vec{s}_{KW}} \mid f(\vec{s}_{KW}) \in W \oplus \ker(\widetilde{P}_J) \}$,
and $\ket{\vec{s}_J} \otimes \ket{\vec{s}_W+\vec{s}_K}
\in \mathcal{G}(J) \otimes \mathcal{G}_{\mathrm{rest}}$
is identified with $\ket{\vec{s}} \in \bigotimes_{i=1}^{\dim C_1 - \dim C_2} \mathcal{G}_i$ for $\vec{s} = \vec{s}_J + \vec{s}_W+\vec{s}_K$
with $\vec{s}_J \in f^{-1}(V)$, $\vec{s}_W \in f^{-1}(W)$
and $\vec{s}_K \in f^{-1}(\ker(\widetilde{P}_J))$.
This identification is a unitary map between 
$\mathcal{G}(J) \otimes \mathcal{G}_{\mathrm{rest}}$
and $\bigotimes_{i=1}^{\dim C_1 - \dim C_2} \mathcal{G}_i$,
because it is linear and preserves the inner product.

\begin{example}\label{ex2}
We retain the notations from Example \ref{ex1}.
Let $J=\{1,2,3\}$ and $\overline{J} = \{4,5\}$.
Firstly we examine $\ker(\widetilde{P}_{\overline{J}})
\subset C_1 / C_2$.
When $(s_1, s_2, s_3 ) = (2,1,0)$
or $(s_1, s_2, s_3 ) = (0,0,1)$,
$P_{\overline{J}}(f(s_1, s_2,s_3)) = P_{\overline{J}}(C_2)$,
from which we see that
$\ker(\widetilde{P}_{\overline{J}})$ is
two-dimensional linear space spanned by
$f(2, 1, 0)$ and $f(0, 0, 1)$.
On the other hand,
$P_{J}(f(2, 1, 0)) \neq P_{J}(C_2)$ and
$P_{J}(f(0, 0, 1)) = P_{J}(C_2)$,
which mean that
$\ker(\widetilde{P}_{\overline{J}}) \cap \ker(\widetilde{P}_J)$
is
one-dimensional linear space spanned by $f(0, 0, 1)$.
We also observe that $V$ is the one-dimensional space 
spanned by $f(2,1, 0)$,
that $\ker(\widetilde{P}_J)$ is the one-dimensional space 
spanned by $f(0,0,1)$.
There is some freedom in choosing $W$,
for example, we can choose $W$ as
the one-dimensional space 
spanned by $f(1,0,0)$.

$\mathcal{G}(J)$ is the $7$-dimensional complex linear space spanned by
$\{ \ket{2a} \otimes \ket{a} \otimes \ket{0} \mid a \in \mathbf{F}_7 \}$,
while 
$\mathcal{G}_{\mathrm{rest}}$ is 
the $49$-dimensional complex linear space spanned by
$\{ \ket{s_1} \otimes \ket{0} \otimes \ket{s_3}\mid s_1, s_3 \in \mathbf{F}_7 \}$.
\end{example}

In this section we shall prove that
a set $J$ of  participants can reconstruct the part of
the quantum secret (\ref{eq3}) from (\ref{eq4}).
The reconstructed part 
is a state in $\mathcal{G}(J)$.
By reordering indices we may assume $J = \{1$, \ldots, $|J|\}$.
We also assume 
\begin{equation}
\dim \widetilde{P}_J(\ker(\widetilde{P}_{\overline{J}})) > 0, \label{eq:cjpositive}
\end{equation}
otherwise the set $J$ can reconstruct no part of the secret
by the proposed decoding procedure.

The restriction of $\widetilde{P}_J \circ f$ to $V$ is injective by the definition
of $V$.
This and the definitions of $V$ and $W$ imply
 that there exists an $\mathbf{F}_q$-linear isomorphism $g_1$
from $P_J(C_1)/P_J(C_2)$ to $\mathbf{F}_q^{\dim P_J(C_1)-\dim P_J(C_2)}$
with the following condition.
When we write $\vec{s} = \vec{s}_J + \vec{s}_W + \vec{s}_K$ in the same
way as the previous paragraph
for $\vec{s} \in \mathbf{F}_q^{\dim C_1 - \dim C_2}$
then $g_1(\widehat{P}_J(f(\vec{s})) = (\vec{s}_J$, $\vec{s}_W) \in
\mathbf{F}_q^{\dim P_J(C_1)-\dim P_J(C_2)}$.
If (\ref{eq402}) holds then we have
$ V = C_1 / C_2$ and we regard $\vec{s}_W$ and $\vec{s}_K$ as $\vec{0}$ and
$\vec{s}_J$ as $\vec{s}$.
Observe that $g_1$ is inverting the restriction of
$\widetilde{P}_J \circ f$ to $V$.

On the other hand, there also exists an $\mathbf{F}_q$-linear
epimorphism $g_2$ from $P_J(C_1)$ to $\mathbf{F}_q^{\dim P_J(C_2 \cap \ker(P_{\overline{J}}))}$
that is one-to-one on every coset 
belonging to the factor linear space
$P_J(C_1)/P_J(C_2 \cap \ker(P_{\overline{J}}))$.
The above map can be constructed as follows:
Find a direct sum decomposition of
$P_J(C_1) = P_J(C_2 \cap \ker(P_{\overline{J}})) \oplus U$
For $\vec{x}\in P_J(C_1)$,
find a decomposition
$\vec{x} = \vec{x}_1 + \vec{x}_2$ such that
$\vec{x}_1 \in P_J(C_2 \cap \ker(P_{\overline{J}}))$
and $\vec{x}_2 \in U$.
Then map $\vec{x}_1$ by a some fixed linear isomorphism
from $P_J(C_2 \cap \ker(P_{\overline{J}}))$
to $\mathbf{F}_q^{\dim P_J(C_2 \cap \ker(P_{\overline{J}}))}$,
while ignoring $\vec{x}_2$.
Observe that $g_2$ is extracting
the $P_J(C_2 \cap \ker(P_{\overline{J}}))$-component.

By a construction similar to $g_2$,
there also exists an $\mathbf{F}_q$-linear
epimorphism $g_3$ from $P_J(C_1)/P_J(C_2 \cap \ker(P_{\overline{J}}))$ 
to $\mathbf{F}_q^{\dim P_J(C_2) - \dim P_J(C_2 \cap \ker(P_{\overline{J}}))}$
that is one-to-one on  on every coset 
belonging to the factor linear space
$P_J(C_1)/P_J(C_2)$ such that
the value of $g_3$ is determined by $\vec{s}_W$, $\vec{s}_K$,
and $P_{\overline{J}}(\vec{x})$ independently of $\vec{s}_J$.
Observe also that
$g_3$ is extracting the $P_J(C_2)$-component
from the factor linear space 
$P_J(C_1)/P_J(C_2 \cap \ker(P_{\overline{J}}))$.

Consider the $\mathbf{F}_q$-linear map $g_4$ from
$P_J(C_1)$ to $\mathbf{F}_q^{\dim P_J(C_1)}$
sending $\vec{v} \in P_J(C_1)$ to $(g_1(\vec{v}+P_J(C_2))$, $g_2(\vec{v})$,
$g_3(\vec{v}+P_J(C_2 \cap \ker(P_{\overline{J}}))))$.
We see that $g_4$ is an $\mathbf{F}_q$-linear isomorphism
because it is surjective and the domain and the image of
$g_4$ have the same dimension.

For $\vec{v} \in P_J(C_1)$,
we can construct a unitary operation
sending $\ket{\vec{v}} \in \bigotimes_{j=1}^{|J|} \mathcal{H}_j$
to $\ket{g_4(\vec{v}),\vec{0}} \in \bigotimes_{j=1}^{|J|} \mathcal{H}_j$,
where $\vec{0}$ is the zero vector of
length $|J|-\dim P_J(C_1)$.
Since this unitary operation does not change
$\mathcal{H}_{|J|+1}$, \ldots, $\mathcal{H}_n$,
it can be executed only by
the first to the $|J|$-th participants.
Applying the unitary operation to (\ref{eq4}) gives
\begin{eqnarray}
&&\sum_{\vec{s} \in \mathbf{F}_q^{\dim C_1 -\dim C_2}} \alpha(\vec{s})
\frac{1}{\sqrt{|C_2|}} \sum_{\vec{x} \in f(\vec{s})} |
\vec{s}_J, 
\vec{s}_W,\nonumber\\
&& g_2(P_J(\vec{x})), g_3(P_J(\vec{x})+P_J(C_2 \cap \ker(P_{\overline{J}}))),
\vec{0}, P_{\overline{J}}(\vec{x})\rangle.\label{eq6}
% \in \bigotimes_{j=1}^{n} \mathcal{H}_j
\end{eqnarray}
$g_2(P_J(\vec{x}))$ can become any vector
in $\mathbf{F}_q^{\dim P_J(C_2 \cap \ker(P_{\overline{J}}))}$
independently of  $\vec{s}_J$, $\vec{s}_W$, $\vec{s}_K$ and $P_{\overline{J}}(\vec{x})$.
Hereafter we denote $g_2(P_J(\vec{x}))$ by $\vec{u}_1$.
For a fixed $\vec{s} \in \mathbf{F}_q^{\dim C_1 - \dim C_2}$
$P_{\overline{J}}(\vec{x})$ can become any vector in the coset
$\widetilde{P}_{\overline{J}}(f(\vec{s})) \in
P_{\overline{J}}(C_1)/P_{\overline{J}}(C_2)$, and
 $\vec{s}_W$ determines
which coset of $P_{\overline{J}}(C_1)/P_{\overline{J}}(C_2)$
contains $P_{\overline{J}}(\vec{x})$
independently of $\vec{s}_J$, $\vec{s}_K$ and $\vec{u}_1$.
Hereafter we denote the coset $\widetilde{P}_{\overline{J}}(f(\vec{s}))
= P_{\overline{J}}(\vec{x}) + P_{\overline{J}}(C_2)$
by $g_5(\vec{s}_W)$.
By the definition of $g_3$, $g_3(P_J(\vec{x})+P_J(C_2 \cap \ker(P_{\overline{J}})))$ is determined by only $\vec{s}_W$, $\vec{s}_K$ and
$P_{\overline{J}}(\vec{x})$,
that is, independent of $\vec{s}_J$. Hereafter we denote $g_3(P_J(\vec{x})+P_J(C_2 \cap \ker(P_{\overline{J}})))$ by $g_6(\vec{s}_W$, 
$\vec{s}_K$, $P_{\overline{J}}(\vec{x}))$.
By using these notations we can rewrite (\ref{eq6}) as
\begin{equation}
\sum_{\vec{s} \in \mathbf{F}_q^{\dim C_1 -\dim C_2}} \alpha(\vec{s})\ket{\vec{s}_J}
\frac{1}{\sqrt{|C_2|}} \sum_{\begin{array}{l} \scriptstyle \vec{u}_1 \in \mathbf{F}_q^{\dim P_J(C_2 \cap \ker(P_{\overline{J}}))}\\\scriptstyle \vec{u}_2 \in g_5(\vec{s}_W) \end{array}} |
\vec{s}_W, \vec{u}_1, g_6(\vec{s}_W, \vec{s}_K, \vec{u}_2),
\vec{0}, \vec{u}_2\rangle,\label{eq7}
\end{equation}
which means that the part $\ket{\vec{s}_J}$ of
the quantum secret (\ref{eq3}) is reconstructed 
but in general entangled with the rest of quantum system.

If the quantum secret is a product state written as
\[
\sum_{\vec{s} \in \mathbf{F}_q^{\dim C_1-\dim C_2}}
\alpha(\vec{s}) \ket{\vec{s}}
= \left( \sum_{\vec{s}_J \in V}
\alpha(\vec{s}_J) \ket{\vec{s}_J} \right) \otimes
\left(\sum_{\vec{s}_W,\vec{s}_K}
\alpha(\vec{s}_W,\vec{s}_K) \ket{\vec{s}_W,\vec{s}_K}\right)
\]
then (\ref{eq7}) can be written as
\[\left(\sum_{\vec{s}_J \in V}
\alpha(\vec{s}_J)\ket{\vec{s}_J}\right) \otimes
\left( \sum_{\vec{s}_W,\vec{s}_K}
\alpha(\vec{s}_W,\vec{s}_K) 
\frac{1}{\sqrt{|C_2|}} \sum_{\begin{array}{l} \scriptstyle \vec{u}_1 \in \mathbf{F}_q^{\dim P_J(C_2 \cap \ker(P_{\overline{J}}))}\\\scriptstyle \vec{u}_2 \in g_5(\vec{s}_W) \end{array}} |
\vec{s}_W, \vec{u}_1, g_6(\vec{s}_W, \vec{s}_K, \vec{u}_2),
\vec{0}, \vec{u}_2\rangle\right),
\]
and the reconstructed secret is not entangled with the
rest of quantum system.

Observe also that the number of qudits in the reconstructed part
is $\dim V = \dim \widetilde{P}_J(\ker(\widetilde{P}_{\overline{J}}))$ 
and if (\ref{eq402}) holds then
the entire secret is reconstructed.
Because 
the complement of any qualified set is forbidden
by \cite[Proposition 3]{ogawa05},
we see that the set $\overline{J}$ of participants
 has no information on the quantum secret
(\ref{eq3}) if (\ref{eq402}) holds.
This finishes the proof of Item \ref{l:decode} in Theorem \ref{thm1}.
\qed

\begin{example}\label{ex3}
We retain the notations from Example \ref{ex2}.
We have $J=\{1,2,3\}$, $\dim P_J(C_1) = 3$,
and $\dim P_J(C_2) = 1$.
$\dim P_J(C_1)/ P_J(C_2) = 2$.

When we express
\[
\vec{s} = \underbrace{a (2, 1, 0)}_{=\vec{s}_J} +
\underbrace{s_3 (0,0,1)}_{=\vec{s}_K} +
\underbrace{s_1 (1,0,0)}_{=\vec{s}_W},
\]
and fix $r$ in (\ref{eq:ex1}),
the index vector $\vec{x}$ in (\ref{eq:ex1}) becomes
\begin{eqnarray*}
\vec{x} &=& ( r+a + 3s_1 + 6s_3, r+ 5s_1 + 6s_3, 
r+6a + 6s_1 + 6 s_3 , \\ && \qquad r+3a + s_1 + s_3, r+3a + 4s_1 + s_3).
\end{eqnarray*}
$g_1((x_1, x_2, x_3)+P_J(C_2)) = (3x_2 - x_1 - 2x_3$,
$2x_2 - x_1 - x_3 ) =(a$, $s_1)$.
We have $C_2 \cap \ker(P_{\overline{J}}) = \{0\}$ and
$g_2$ is the zero map.
We have $g_3(x_1, x_2) = 2x_1 - x_3 = r+3a + 6s_3$
and $g_4(x_1, x_2) = (a,s_1,r+3a+6s_3)$.
Therefore,
after applying the proposed decoding procedure,
the state (\ref{eq:ex1}) of encoded shares becomes
\begin{eqnarray*}
&&\frac{1}{\sqrt{7}}\sum_{r \in \mathbf{F}_7}
\ket{a,s_1,r+3a+6s_3, r+3a + s_1 + s_3, r+3a + 4s_1 + s_3}\\
&=& \frac{1}{\sqrt{7}} \sum_{r' \in \mathbf{F}_7}
\ket{a,s_1,r'+6s_3, r' + s_1 + s_3, r' + 4s_1 + s_3}
\end{eqnarray*}
where $r' = r+3a$.

We see that
$s_1$ determines, independently of both $a$  and $s_3$, the coset
$\{ (r' + s_1 + s_3, r' + 4s_1 + s_3) \mid r' \in \mathbf{F}_7 \}$,
which is $g_5(\vec{s}_W)$.
$P_{\overline{J}}(\vec{x}) = (r' + s_1 + s_3, r' + 4s_1 + s_3)$, $s_1$
and $s_3$
uniquely determine $g_3(x_1, x_2,x_3)  = r'+6s_3$ which is $g_6$.
\end{example}

\section{Holevo Information and Coherent Information
of a Set of Shares}\label{sec11}
\subsection{Holevo Information}
In this section we prove that  both (\ref{eq101}) and (\ref{eq102})
are necessary for $J$ to be a qualified set.
We use the Holevo information \cite{chuangnielsen}
defined as follows.
Let $\mathcal{S}_{\mathrm{in}}$ and 
$\mathcal{S}_{\mathrm{out}}$ be sets of density matrices,
$\Gamma$ a completely positive trace-preserving
map from $\mathcal{S}_{\mathrm{in}}$ to
$\mathcal{S}_{\mathrm{out}}$,
$\{\rho_1$, \ldots, $\rho_m\} \subset \mathcal{S}_{\mathrm{in}}$,
and $P$ a probability distribution on $\{\rho_1$, \ldots, $\rho_m\}$.
The Holevo information is defined as
\begin{equation}
K(P, \{\rho_1, \ldots, \rho_m\}, \Gamma) =
H\left( \sum_{i=1}^m P(\rho_i) \Gamma(\rho_i) \right)
- \sum_{i=1}^m P(\rho_i) H(\Gamma(\rho_i)), \label{eq:holevo}
\end{equation}
where $H(\cdot)$ denotes the von Neumann entropy
counted in $\log_q$.
The Holevo information essentially
expresses the classical information that can be transferred
over $\Gamma$ \cite{chuangnielsen}.

Let $\Gamma_J$ be the completely positive trace-preserving map
from $\mathcal{S}(\bigotimes_{i=1}^{\dim C_1 - \dim C_2} \mathcal{G}_i)$
to $\mathcal{S}(\bigotimes_{j \in J} \mathcal{H}_j)$ induced by the
encoding procedure proposed in Section \ref{sec2},
where $\mathcal{S}(\cdot)$ denotes the set of density matrices
on a complex space $\cdot$. By $K_J$ we denote 
\begin{equation}
K(\textrm{uniform distribution}, \{ \ket{\vec{s}}\bra{\vec{s}}
\mid \vec{s} \in \mathrm{F}_q^{\dim C_1 - \dim C_2}\}, \Gamma_J).
\label{eq410}
\end{equation}

By \cite[Theorem 1]{ogawa05}
if
\begin{equation}
K_J < \dim C_1 - \dim C_2 \label{eq301}
\end{equation}
then
$J$ is not a qualified set.
The encoding procedure in Section \ref{sec2}
is a pure state scheme \cite[Section 2]{ogawa05},
that is,
the quantum state of all the shares is pure if
the encoded quantum secret is pure.
By \cite[Proposition 3]{ogawa05},
if $\overline{J}$ is not a forbidden set,
then $J$ is not a qualified set.
By \cite[Theorem 1]{ogawa05}
if
\begin{equation}
K_{\overline{J}} > 0 \label{eq302}
\end{equation}
then $\overline{J}$ is not a forbidden set.

We shall prove the next proposition.
By (\ref{eq101}), (\ref{eq102}), (\ref{eq301}) and (\ref{eq302}),
Proposition \ref{prop1} implies that both (\ref{eq101}) and (\ref{eq102}) are
necessary for $J$ to be a qualified set.
\begin{proposition}\label{prop1}
\begin{equation}
K_J = \dim P_J(C_1) - \dim P_J(C_2). \label{eq303}
\end{equation}
\end{proposition}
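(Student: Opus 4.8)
The plan is to compute the Holevo information $K_J$ in (\ref{eq410}) directly from its definition (\ref{eq:holevo}), exploiting the fact that the input states are the orthonormal basis states $\ket{\vec{s}}\bra{\vec{s}}$ taken with uniform probability. Writing $M = \dim C_1 - \dim C_2$, the averaged output state $\sum_{\vec{s}} q^{-M}\Gamma_J(\ket{\vec{s}}\bra{\vec{s}})$ is obtained by tracing out the shares in $\overline{J}$ from the fully-mixed encoded ensemble, and each individual term $\Gamma_J(\ket{\vec{s}}\bra{\vec{s}})$ is obtained by tracing out $\overline{J}$ from the pure encoded state (\ref{eq1}). So the first step is to write down these two partial-trace expressions explicitly.

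First I would compute $H\bigl(\sum_{\vec{s}} q^{-M}\Gamma_J(\ket{\vec{s}}\bra{\vec{s}})\bigr)$. Averaging (\ref{eq1}) over all $\vec{s}$ with uniform weight produces the maximally mixed state on the support spanned by $\{\ket{\vec{x}} \mid \vec{x} \in C_1\}$, since the cosets $f(\vec{s})$ partition $C_1$. Tracing out $\overline{J}$, the reduced state on $\bigotimes_{j\in J}\mathcal{H}_j$ is proportional to $\sum_{\vec{v}\in P_J(C_1)} \ket{\vec{v}}\bra{\vec{v}}$, because distinct values of $P_{\overline{J}}(\vec{x})$ give orthogonal traced-out components while all $\vec{x}\in C_1$ sharing a given $P_J$-projection collapse together. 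Hence this is maximally mixed on a space of dimension $|P_J(C_1)|$, giving entropy $\dim P_J(C_1)$ in $\log_q$ units.

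Next I would compute $H(\Gamma_J(\ket{\vec{s}}\bra{\vec{s}}))$ for a single $\vec{s}$. Here the encoded pure state is the uniform superposition over $\vec{x}\in f(\vec{s})$, a single coset of $C_2$ inside $C_1$. Tracing out $\overline{J}$, the diagonal blocks are indexed by the value of $P_{\overline{J}}(\vec{x})$, so the reduced density matrix is block-diagonal; within each block the $P_J(\vec{x})$-components form a maximally mixed state. The dimension of the support is therefore $|P_J(C_2 \cap \ker(P_{\overline{J}}))|^{-1}|P_J(C_2)|\cdot(\text{number of }P_{\overline{J}}\text{-values})$, which after a short coset-counting argument equals $q^{\dim P_J(C_2)}$; this is independent of $\vec{s}$ by linearity, so each term contributes entropy $\dim P_J(C_2)$. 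Subtracting, $K_J = \dim P_J(C_1) - \dim P_J(C_2)$, which is (\ref{eq303}).

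The main obstacle I anticipate is the single-state entropy computation: the reduced density matrix $\Gamma_J(\ket{\vec{s}}\bra{\vec{s}})$ is not automatically maximally mixed, and one must verify carefully that all nonzero eigenvalues are equal by checking that the block structure induced by $P_{\overline{J}}(\vec{x})$ is genuinely orthogonal and that within each block the surviving $P_J$-components are equiprobable. Precisely, two index vectors $\vec{x}, \vec{x}'\in f(\vec{s})$ contribute to the same matrix element only when $P_{\overline{J}}(\vec{x}) = P_{\overline{J}}(\vec{x}')$, i.e.\ $\vec{x}-\vec{x}' \in C_2 \cap \ker(P_{\overline{J}})$; controlling the image of this subgroup under $P_J$ is what pins down the support dimension to $\dim P_J(C_2)$. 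Getting this counting exactly right — rather than the naive $\dim C_2$ — is the delicate part, and it is also precisely what makes the answer depend on the \emph{projected} codes $P_J(C_1), P_J(C_2)$ as the proposition claims.
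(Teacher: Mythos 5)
Your overall strategy---evaluating both von Neumann entropies in (\ref{eq:holevo}) directly from the two partial traces---is the same as the paper's, but both of your intermediate entropy evaluations are incorrect, and the final formula comes out right only because the two errors cancel. The first problem is the claim that averaging (\ref{eq1}) over $\vec{s}$ gives the maximally mixed state on $\mathrm{span}\{\ket{\vec{x}}\mid\vec{x}\in C_1\}$. It does not: the average is $q^{-(\dim C_1-\dim C_2)}\sum_{\vec{s}}\ket{\psi_{\vec{s}}}\bra{\psi_{\vec{s}}}$ with $\ket{\psi_{\vec{s}}}=|C_2|^{-1/2}\sum_{\vec{x}\in f(\vec{s})}\ket{\vec{x}}$, which retains every off-diagonal term $\ket{\vec{x}}\bra{\vec{x}'}$ with $\vec{x}-\vec{x}'\in C_2$; it is the normalized projector onto the $q^{\dim C_1-\dim C_2}$-dimensional code space, not a rank-$|C_1|$ state. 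The second, parallel problem is the claim that within each block of $\Gamma_J(\ket{\vec{s}}\bra{\vec{s}})$ indexed by $P_{\overline{J}}(\vec{x})$ the $P_J$-components are maximally mixed. You correctly identify that the matrix element $\ket{P_J(\vec{x})}\bra{P_J(\vec{x}')}$ survives the partial trace exactly when $\vec{x}-\vec{x}'\in C_2\cap\ker(P_{\overline{J}})$, but these survivors are \emph{coherences}: each block is the rank-one projector onto the superposition of $\ket{P_J(\vec{x})}$ over a coset of $P_J(C_2\cap\ker(P_{\overline{J}}))$, not a maximally mixed state on that coset. (Your support count, $|P_J(C_2\cap\ker(P_{\overline{J}}))|^{-1}|P_J(C_2)|$ times the number of $P_{\overline{J}}$-values, equals $|P_J(C_2)|\,|C_2|\,/\,|C_2\cap\ker(P_{\overline{J}})|^2$, which is not $q^{\dim P_J(C_2)}$ in general.)

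The correct values are $H(\Gamma_J(\ket{\vec{s}}\bra{\vec{s}}))=\dim P_J(C_2)-\dim P_J(C_2\cap\ker(P_{\overline{J}}))$ and $H(\textrm{average})=\dim P_J(C_1)-\dim P_J(C_2\cap\ker(P_{\overline{J}}))$: each state is an equal mixture of mutually orthogonal rank-one projectors onto coset superpositions, the number of projectors being $|P_J(C_2)|/|P_J(C_2\cap\ker(P_{\overline{J}}))|$ and $|P_J(C_1)|/|P_J(C_2\cap\ker(P_{\overline{J}}))|$ respectively, as in (\ref{eq305}) and (\ref{eq306}). The common correction term $\dim P_J(C_2\cap\ker(P_{\overline{J}}))$ cancels in the Holevo difference, which is why your answer agrees with (\ref{eq303}); but as written your proof asserts two false identities. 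A concrete counterexample: take $J=\{1,\dots,n\}$, so $\overline{J}=\emptyset$ and $C_2\cap\ker(P_{\overline{J}})=C_2$; then $\Gamma_J(\ket{\vec{s}}\bra{\vec{s}})$ is the pure state $\ket{\psi_{\vec{s}}}\bra{\psi_{\vec{s}}}$ with entropy $0$, not $\dim C_2$, and the averaged state has entropy $\dim C_1-\dim C_2$, not $\dim C_1$. More generally your intermediate claims fail whenever $C_2\cap\ker(P_{\overline{J}})\neq\{0\}$, in particular whenever $|\overline{J}|<\dim C_2$, which covers essentially every qualified set of interest. To repair the argument you must work with the equivalence classes modulo $P_J(C_2\cap\ker(P_{\overline{J}}))$ throughout, as the paper does.
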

\begin{proof}
$\Gamma_J(\ket{\vec{s}}\bra{\vec{s}})$
is the partial trace of (\ref{eq4})
over $\bigotimes_{j\in \overline{J}} \mathcal{H}_j$.
By the definition of partial trace
\begin{eqnarray}
&&\Gamma_J(\ket{\vec{s}}\bra{\vec{s}})\nonumber\\
&=&
\frac{1}{|C_2|} 
\sum_{\vec{x}_1, \vec{x}_2 \in f(\vec{s})}
\ket{P_J(\vec{x_1})}\bra{P_J(\vec{x_2})}
\underbrace{ \langle P_{\overline{J}}(\vec{x}_1) | P_{\overline{J}}(\vec{x}_2)\rangle}_{= 1
\Leftrightarrow \vec{x}_2 \in \vec{x}_1 + \ker(P_{\overline{J}})} 
\nonumber\\
&=& \frac{1}{|C_2|} 
\sum_{\vec{u} \in P_{\overline{J}}(f(\vec{s}))}
\sum_{\vec{x}_1 \in  f(\vec{s}) \cap P_{\overline{J}}^{-1}(\vec{u})}
\sum_{\vec{x}_2 \in f(\vec{s}) \cap P_{\overline{J}}^{-1}(\vec{u})}
\ket{P_J(\vec{x_1})}\bra{P_J(\vec{x_2})}\nonumber\\
&=& 
\frac{1}{|C_2|} 
\sum_{\vec{u} \in P_{\overline{J}}(f(\vec{s}))}
\left(\sum_{\vec{x}_1 \in  f(\vec{s}) \cap P_{\overline{J}}^{-1}(\vec{u})}\ket{P_J(\vec{x_1})}\right)
\left(\sum_{\vec{x}_2 \in f(\vec{s}) \cap P_{\overline{J}}^{-1}(\vec{u})}
\bra{P_J(\vec{x_2})}\right)\nonumber\\
&=& 
\frac{1}{|C_2|} 
\sum_{\vec{u} \in P_{\overline{J}}(f(\vec{s}))}
\left(\sum_{\vec{x}_1 \in  f(\vec{s}) \cap ((\vec{0},\vec{u})+\ker(P_{\overline{J}}))}\ket{P_J(\vec{x_1})}\right)
\left(\sum_{\vec{x}_2 \in f(\vec{s}) \cap ((\vec{0},\vec{u})+\ker(P_{\overline{J}}))}
\bra{P_J(\vec{x_2})}\right) . \label{eq304}
\end{eqnarray}
For $\vec{u}_1$, $\vec{u}_2 \in P_{\overline{J}}(f(\vec{s}))$,
if $f(\vec{s}) \cap ((\vec{0},\vec{u}_1)+ \ker(P_{\overline{J}})) =f(\vec{s}) \cap ((\vec{0},\vec{u}_2)+ \ker(P_{\overline{J}}))$
then $\vec{x}_1$ and $\vec{x}_2$ in (\ref{eq304}) are taken over the same
set $P_J(\vec{x}) + P_J(C_2 \cap \ker(P_{\overline{J}}))$, where
$\vec{x}$ is any vector in $f(\vec{s}) \cap ((\vec{0},\vec{u}_1)+ \ker(P_{\overline{J}}))$.
Otherwise $\vec{x}_1$ and $\vec{x}_2$ in (\ref{eq304}) are taken over two
disjoint sets in $P_J(f(\vec{s}))$.
So (\ref{eq304}) is equal to
\begin{equation}
\frac{1}{|C_2|}
\sum_{A \in P_J(f(\vec{s}))/\sim} 
\left(\sum_{\vec{v} \in A} \ket{\vec{v}}\right)
\left(\sum_{\vec{v} \in A} \bra{\vec{v}}\right), \label{eq305}
\end{equation}
where $\sim$ is the equivalence relation that defines $\vec{v}_1$,
$\vec{v}_2 \in P_J(\mathbf{F}_q^n)$ to be equivalent
if $\vec{v}_1 \in
\vec{v}_2 + P_J(C_2 \cap \ker(P_{\overline{J}}))$.
(\ref{eq305}) is an equal mixture of $|P_J(C_2)/P_J(C_2\cap\ker(P_{\overline{J}}))|$
projection matrices to non-overlapping orthogonal
spaces,
therefore its von Neumann entropy is $\dim P_J(C_2) -
\dim P_J(C_2 \cap \ker(P_{\overline{J}}))$, 
which is the second term in the right hand side of (\ref{eq:holevo}).

By (\ref{eq305}),
the density matrix of the first term in RHS of of (\ref{eq:holevo})
is
\begin{eqnarray}
&&\frac{1}{q^{\dim C_1 - \dim C_2}} \sum_{\vec{s} \in \mathbf{F}_q^{\dim C_1-\dim C_2}}
\frac{1}{|C_2|}
 \sum_{A \in P_J(f(\vec{s}))/\sim} 
\left(\sum_{\vec{v} \in A} \ket{\vec{v}}\right)
\left(\sum_{\vec{v} \in A} \bra{\vec{v}}\right)\nonumber\\
&=&
\frac{1}{|C_1|},
 \sum_{A \in P_J(C_1)/P_J(C_2 \cap \ker(P_{\overline{J}}))} 
\left(\sum_{\vec{v} \in A} \ket{\vec{v}}\right)
\left(\sum_{\vec{v} \in A} \bra{\vec{v}}\right). \label{eq306}
\end{eqnarray}
The von Neumann entropy of (\ref{eq306})
is 
\begin{equation}
\dim P_J(C_1) -
\dim P_J(C_2 \cap \ker(P_{\overline{J}}))\label{eq411}
\end{equation}
 by the same argument as the last paragraph.
By (\ref{eq:holevo}) $K_J = \dim P_J(C_1) - \dim P_J(C_2)$.
\qed
\end{proof}

\subsection{Coherent Information}
We use the same notation as (\ref{eq:holevo}).
Denote by $\Gamma_E$ the channel to the environment
so that any pure state is mapped to a pure state
by $\Gamma \otimes \Gamma_E$.
The channel to the environment for $\Gamma_J$
is $\Gamma_{\overline{J}}$.
Then the coherent information of
the input state $\rho$ and the channel $\Gamma$
is defined by \cite{chuangnielsen}
\begin{equation}
H(\Gamma(\rho)) - H(\Gamma_E(\rho)). \label{eq:coherent}
\end{equation}
Equation (\ref{eq:coherent}) can become  negative.
The quantum capacity is expressed by the maximum
of the coherent information over $\rho$ \cite{devetak05}.

The coherent information of $\Gamma_J$ and
the completely mixed secret
$\frac{1}{q^{\dim C_1 - \dim C_2}}$ $\sum_{\vec{s}\in \mathbf{F}_q^{\dim C_1 - \dim C_2}}
\ket{\vec{s}}\bra{\vec{s}}$ is
(\ref{eq411}) subtracted by (\ref{eq411}) with $J$ substituted by
$\overline{J}$. Therefore the coherent information is
\begin{equation}
\dim P_J(C_1) - \dim C_2 \cap \ker(P_{\overline{J}})
- (\dim P_{\overline{J}}(C_1) - \dim C_2 \cap \ker(P_J)).
\label{cj}
\end{equation}
We consider to maximize (\ref{cj}) by replacing
$C_1$ by $D$ such that $C_2 \subset D \subset C_1$.
This amounts to maximize (\ref{eq:coherent})
over the quantum state completely mixed over
the subspace spanned by
$\{ \ket{\vec{s}} \mid
f(\vec{s}) \subset D\}$.

\begin{lemma}
Let $D$ be as above. Define
\[
D' = C_2 + (D \cap \ker(P_{\overline{J}})).
\]
Then we have
\begin{eqnarray}
&&\dim P_J(D) - \dim C_2 \cap \ker(P_{\overline{J}})
- (\dim P_{\overline{J}}(D) - \dim C_2 \cap \ker(P_J))\nonumber\\
&=& \dim P_J(D') - \dim C_2 \cap \ker(P_{\overline{J}})
- (\dim P_{\overline{J}}(D') - \dim C_2 \cap \ker(P_J)). \label{target}
\end{eqnarray}
\end{lemma}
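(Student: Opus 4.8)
The plan is to begin by discarding what is literally common to both sides. The summands $\dim(C_2\cap\ker(P_{\overline J}))$ and $\dim(C_2\cap\ker(P_J))$ occur identically in the two members of (\ref{target}), so (\ref{target}) is equivalent to
\[
\dim P_J(D)-\dim P_{\overline J}(D)=\dim P_J(D')-\dim P_{\overline J}(D').
\]
Setting $E(X)=\dim P_J(X)-\dim P_{\overline J}(X)$, I would record the reformulation coming from rank--nullity together with the direct sum $\mathbf{F}_q^n=\ker(P_J)\oplus\ker(P_{\overline J})$ (valid because every vector splits into its $J$-supported and $\overline J$-supported parts): since $\dim P_J(X)=\dim X-\dim(X\cap\ker(P_J))$ and similarly for $\overline J$, one gets $E(X)=\dim(X\cap\ker(P_{\overline J}))-\dim(X\cap\ker(P_J))$. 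Thus the task reduces to comparing two intersection dimensions for $X=D$ and for $X=D'$.

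The second step is to convert the definition $D'=C_2+(D\cap\ker(P_{\overline J}))$ into a clean dimension identity. Because every vector of $D\cap\ker(P_{\overline J})$ is annihilated by $P_{\overline J}$, we have $P_{\overline J}(D')=P_{\overline J}(C_2)$. Moreover I would prove $D\cap(C_2+\ker(P_{\overline J}))=D'$: the inclusion $\supseteq$ is immediate, while for $\subseteq$ one writes $x=c+k$ with $c\in C_2\subseteq D$ and $k\in\ker(P_{\overline J})$, notes $k=x-c\in D$, so $k\in D\cap\ker(P_{\overline J})$ and hence $x\in D'$. Consequently $P_{\overline J}$ induces an isomorphism $D/D'\cong P_{\overline J}(D)/P_{\overline J}(C_2)$, which yields $\dim D-\dim D'=\dim P_{\overline J}(D)-\dim P_{\overline J}(D')$. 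Feeding this into the reduced goal collapses it to the single equality $\dim P_J(D)-\dim P_J(D')=\dim D-\dim D'$.

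For the last step I would analyse the surjection $P_J\colon D\to P_J(D)/P_J(D')$, whose kernel is $D\cap(D'+\ker(P_J))=D'+(D\cap\ker(P_J))$ by the modular law (using $D'\subseteq D$). This shows $\dim P_J(D)-\dim P_J(D')=\dim D-\dim(D'+(D\cap\ker(P_J)))$, so the equality just obtained holds if and only if $D\cap\ker(P_J)\subseteq D'$. Establishing this containment is the step I expect to be the main obstacle: it is the only place where the particular form of $D'$ and the hypothesis $C_2\subseteq D$ must genuinely interact, and it is a statement about the pair $(C_2,D)$ rather than a formal dimension count, since $D\cap\ker(P_J)$ consists of the $\overline J$-supported vectors of $D$ whereas the generators $D\cap\ker(P_{\overline J})$ of $D'$ are $J$-supported. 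My line of attack would be to take $v\in D\cap\ker(P_J)$ and use the coset map $f$ and the two complementary projections to rewrite $v$ modulo $C_2$ through elements of $D\cap\ker(P_{\overline J})$; isolating the precise property of $(C_2,D)$ that licenses this rewriting is the heart of the argument.
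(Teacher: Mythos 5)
Your chain of reductions is correct, and it is sharper than the paper's own argument: cancelling the common terms, rewriting $E(X)=\dim P_J(X)-\dim P_{\overline J}(X)$ as $\dim(X\cap\ker(P_{\overline J}))-\dim(X\cap\ker(P_J))$ via rank--nullity, the identity $D\cap(C_2+\ker(P_{\overline J}))=D'$, and the final equivalence of (\ref{target}) with the containment $D\cap\ker(P_J)\subseteq D'$ are all sound. But the step you flag as ``the heart of the argument'' is not merely the remaining difficulty --- it is false, so the proof cannot be completed as an equality. Take $n=3$, $J=\{1\}$, $C_2=\langle(1,1,1)\rangle$, $D=\langle(1,1,1),(0,0,1)\rangle$ (and any $C_1\supsetneq D$). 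Then $D\cap\ker(P_{\overline J})=\{\vec{0}\}$, so $D'=C_2$, while $(0,0,1)\in D\cap\ker(P_J)\setminus D'$. Concretely the left-hand side of (\ref{target}) equals $1-0-(2-0)=-1$ and the right-hand side equals $1-0-(1-0)=0$, so the stated equality itself fails; your equivalence predicted exactly this.

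What is true --- and all that the subsequent maximization needs --- is the inequality asserting that $D'$ does at least as well as $D$, and it falls out of your intersection formula in one line: $D'\cap\ker(P_{\overline J})=D\cap\ker(P_{\overline J})$ because $D\cap\ker(P_{\overline J})\subseteq D'\subseteq D$, while $D'\cap\ker(P_J)\subseteq D\cap\ker(P_J)$ because $D'\subseteq D$; hence $E(D')\ge E(D)$, with equality precisely when $D\cap\ker(P_J)\subseteq D'$. The paper's own proof (decompose $D=D'\oplus D''$, observe that $P_{\overline J}$ is injective on $D''$, and conclude that $D''$ ``does not help to increase'' the value) suffers from the same defect: injectivity of $P_{\overline J}$ on $D''$ shows that the $D''$ summand cannot raise the coherent information, not that it leaves it unchanged, and in the example above it strictly lowers it. So rather than hunting for a proof of $D\cap\ker(P_J)\subseteq D'$, you should state and prove the inequality version of the lemma, which your argument already establishes and which is enough to conclude that $C_2+(C_1\cap\ker(P_{\overline J}))$ attains the maximum.
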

\begin{proof}
Let $D = D' \oplus D''$. Then
$\dim D'' = \dim P_{\overline{J}}(D'')$ because
$D'' \cap \ker(P_{\overline{J}}) = \{\vec{0}\}$.
Therefore the $D''$ component in $D$ does not help to increase
the value of (\ref{cj}). Thus $D'$ yields the same value
for (\ref{cj}) as $D$ and we have (\ref{target}).
\qed
\end{proof}

So we see that $D = C_2 + (C_1 \cap \ker(P_{\overline{J}}))$
maximizes the coherent information to
its maximum value 
\begin{eqnarray*}
&& \dim P_J(C_2 + (C_1 \cap \ker(P_{\overline{J}}))) - \dim C_2 \cap \ker(P_{\overline{J}})\\
&&\mbox{ }- (\underbrace{\dim P_{\overline{J}}(C_2 + (C_1 \cap \ker(P_{\overline{J}}))}_{= \dim P_{\overline{J}}(C_2)} - \dim C_2 \cap \ker(P_J))\\
&=& \dim P_J(C_2 + (C_1 \cap \ker(P_{\overline{J}})))
- \underbrace{(\dim C_2 \cap \ker(P_{\overline{J}}) +
 \dim P_{\overline{J}}(C_2) - \dim C_2 \cap \ker(P_J))}_{= \dim P_J(C_2)}\\
&=& \dim \widetilde{P}_J(\ker \widetilde{P}_{\overline{J}}).
\end{eqnarray*}
We remark that  the proposed decoding procedure
in Section \ref{sec3} reconstructs precisely that number of qudits
in the secret.

\section{Analysis of the Conventional Scheme}\label{sec4}
In this section we show that the conventional
quantum ramp secret SS 
\cite{ogawa05} can be regarded as a special
case of the proposed construction,
and its qualified and forbidden sets can be identified by
Theorem \ref{thm1}.
Let $\alpha_1$, \ldots, $\alpha_n$ be pairwise distinct
nonzero\footnote{In \cite{ogawa05} $\alpha_i=0$ was not
explicitly prohibited,
but an author of \cite{ogawa05} informed that $\alpha_i$
must be nonzero for all $i=1$, \ldots, $n$.} elements
in $\mathbf{F}_q$, which correspond to $x_1$, \ldots, $x_n$
in \cite{ogawa05}. Denote $(\alpha_1$, \ldots, $\alpha_n)$ by
$\vec{\alpha}$.
Let $\vec{v} \in (\mathbf{F}_q\setminus \{0\})^n$.
Then the generalized Reed-Solomon code 
$\mathrm{GRS}_{n,k}(\vec{\alpha}$, $\vec{v})$ is \cite[Section 10.\S8]{macwilliams77}
\begin{equation}
\{ (v_1 h(\alpha_1), \ldots, v_nh(\alpha_n)) \mid \deg h(x) \leq k-1 \},
\label{eq8}
\end{equation}
where $h(x)$ is a univariate polynomial over $\mathbf{F}_q$.
Let $\vec{1} = (1$, \ldots, $1) \in \mathbf{F}_q^n$ and
$\vec{\alpha}^L = (\alpha_1^L$, \ldots, $\alpha_n^L) \in \mathbf{F}_q^n$.
The conventional scheme \cite{ogawa05}
is a special case of the proposed construction
with $C_1 =\mathrm{GRS}_{n,k}(\vec{\alpha}$, $\vec{1})$ and
$C_2 =\mathrm{GRS}_{n,k-L}(\vec{\alpha}$, $\vec{\alpha}^L)$.
Observe that $C_2 \subsetneq C_1$,
$\dim C_1 = k$, and $\dim C_2 = k-L$.
By the property of the generalized Reed-Solomon codes
(see e.g.\ \cite[Section 11.\S4]{macwilliams77}),
any subset $J \subseteq
\{1$, \ldots, $n\}$ satisfies both (\ref{eq101}) and (\ref{eq102})
if $|J| \geq \dim C_1$ and $|\overline{J}| \leq \dim C_2$.
Observe that the original restriction
$n =  \dim C_1 + \dim C_2$ \cite{ogawa05} is removed here.

\section{Algebraic Geometric Construction}\label{sec5}
In this section we give a construction of $C_1 \supset C_2$
based on algebraic geometry (AG) codes.
A major benefit of the AG codes is that $n$ can become arbitrarily
large for a fixed $q$ \cite{bn:stichtenoth}.
For terminology and mathematical notions of AG codes,
please refer to \cite{bn:stichtenoth}.
Let $F/\mathbf{F}_q$ be an algebraic function field
of one variable over $\mathbf{F}_q$,
$P_1$, \ldots, $P_n$ pairwise distinct places of degree one
in $F$, and $G_1$, $G_2$ divisors of $F$ whose supports
contain none of $P_1$, \ldots, $P_n$.
We assume $G_1 \geq G_2$.
Denote by $\mathcal{L}(G_1)$ the $\mathbf{F}_q$-linear
space associated with $G_1$.
The functional AG code associated with $G_1$, $P_1$, \ldots, $P_n$
is defined as
\[
C(G_1,P_1, \ldots, P_n) = \{ (f(P_1), \ldots, f(P_n)) \mid f \in \mathcal{L}(G_1) \}.
\]
Since $G_1 \geq G_2$ we have $C(G_1$, $P_1$, \ldots, $P_n) \supseteq C(G_2$,
$P_1$, \ldots, $P_n)$.
We further assume $C(G_1$, $P_1$, \ldots, $P_n) \neq C(G_2$,
$P_1$, \ldots, $P_n)$.
\begin{theorem}\label{thm2}
The ramp quantum SS constructed from
$C(G_1$, $P_1$, \ldots, $P_n) \supsetneq C(G_2$,
$P_1$, \ldots, $P_n)$ encodes $\dim C(G_1$,
$P_1$, \ldots, $P_n)- \dim C(G_2$,
$P_1$, \ldots, $P_n)$
qudits to $n$ shares. We have
\begin{eqnarray}
&&\dim C(G_1,P_1, \ldots, P_n)- \dim C(G_2,P_1, \ldots, P_n)\nonumber\\
& \geq& \deg G_1 - \deg G_2 - g(F), \label{eq201}
\end{eqnarray}
where $g(F)$ denotes the genus of $F$.
A set $J \subseteq \{1$, \ldots, $n\}$ is
a qualified set and its complement $\overline{J}$ is a forbidden set
if 
\begin{equation}
|J| \geq \max\{1+\deg G_1, n- (\deg G_2 - 2g(F)+1)\}. \label{eq502}
\end{equation}
\end{theorem}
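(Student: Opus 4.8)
The plan is to reduce everything to Theorem \ref{thm1}, so the real content is (i) the dimension bound \eqref{eq201}, and (ii) verifying the two combinatorial conditions \eqref{eq101} and \eqref{eq102} under hypothesis \eqref{eq502}. The encoding claim (number of encoded qudits equals $\dim C(G_1,P_1,\ldots,P_n) - \dim C(G_2,P_1,\ldots,P_n)$, each share a qudit) is immediate from Item \ref{l:encode} of Theorem \ref{thm1} applied to the nested pair $C_2 = C(G_2,P_1,\ldots,P_n) \subsetneq C_1 = C(G_1,P_1,\ldots,P_n)$.

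For the dimension bound, I would use the standard evaluation exact sequence. For a divisor $G$ with support disjoint from $P_1,\ldots,P_n$, the evaluation map $\mathcal{L}(G) \to \mathbf{F}_q^n$ has kernel $\mathcal{L}(G - P_1 - \cdots - P_n)$, so $\dim C(G,P_1,\ldots,P_n) = \dim \mathcal{L}(G) - \dim \mathcal{L}(G - \sum_i P_i)$. Taking the difference for $G_1$ and $G_2$ and applying the Riemann--Roch theorem (or just the Riemann inequality $\dim \mathcal{L}(G) \geq \deg G + 1 - g(F)$) to each of the four spaces, I expect the combination to telescope to $\dim C_1 - \dim C_2 \geq \deg G_1 - \deg G_2 - g(F)$; the genus contributions from the two subtracted terms should cancel against part of the main terms, leaving a single $-g(F)$. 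This is essentially the same computation that controls the dimension of AG codes, so I would treat it as routine once the exact sequence is written.

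The heart of the proof is the qualified/forbidden characterization. By Items \ref{l:implication} and \ref{l:necessity}, $J$ is qualified (with $\overline{J}$ forbidden) precisely when \eqref{eq101} and \eqref{eq102} both hold. The key observation is that $P_J(C(G,P_1,\ldots,P_n))$ is itself an AG code, namely $C(G, (P_i)_{i \in J})$, evaluated at the places indexed by $J$. For \eqref{eq101} I would show that when $|J| \geq 1 + \deg G_1 \geq 1 + \deg G_2$, the evaluation map $\mathcal{L}(G_1) \to \mathbf{F}_q^{|J|}$ is injective, because any nonzero $f \in \mathcal{L}(G_1)$ has at most $\deg G_1 < |J|$ zeros among $P_i$, $i \in J$ (a nonzero function in $\mathcal{L}(G_1)$ cannot vanish at more than $\deg G_1$ of the degree-one places); the same holds for $G_2$, so both restriction-to-$J$ maps are injective, giving $\dim P_J(C_1) = \dim C_1$ and $\dim P_J(C_2) = \dim C_2$, hence \eqref{eq101}. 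For \eqref{eq102}, which asks $\dim P_{\overline{J}}(C_1) = \dim P_{\overline{J}}(C_2)$, I would instead use the dual description: $\dim P_{\overline{J}}(C(G,P_1,\ldots,P_n)) = \dim C(G,(P_i)_{i \in \overline{J}})$, and when $|\overline{J}| = n - |J|$ is small enough that the residue/gap condition $\deg G - 2g(F)+1 \geq |\overline{J}|$ holds for $G = G_2$ (equivalently $|J| \geq n - (\deg G_2 - 2g(F)+1)$), Riemann--Roch forces the two restricted codes indexed by $\overline{J}$ to have the same dimension, namely the number of places $|\overline{J}|$, since both evaluation maps are surjective onto $\mathbf{F}_q^{|\overline{J}|}$. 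I would make this precise by checking that the quotient $P_{\overline{J}}(C_1)/P_{\overline{J}}(C_2)$ is forced to be zero because both spaces already fill the available $|\overline{J}|$-dimensional target.

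I expect the main obstacle to be the surjectivity argument underlying \eqref{eq102}: one must show that the evaluation map at the places in $\overline{J}$ is surjective for \emph{both} $G_1$ and $G_2$ under the single bound $|J| \geq n - (\deg G_2 - 2g(F)+1)$. The clean way is to argue that $C(G_2,(P_i)_{i \in \overline{J}})$ already equals all of $\mathbf{F}_q^{|\overline{J}|}$; this follows from Riemann--Roch applied to $G_2 - \sum_{i \in \overline{J}} P_i$, whose degree $\deg G_2 - |\overline{J}| \geq 2g(F) - 1$ puts it past the range where dimension jumps can occur, so the evaluation-kernel dimensions line up to force surjectivity. Since $G_1 \geq G_2$ gives $C(G_2,\cdot) \subseteq C(G_1,\cdot)$, surjectivity for $G_2$ forces it for $G_1$ as well, and then both restricted codes equal $\mathbf{F}_q^{|\overline{J}|}$, yielding \eqref{eq102}. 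The care needed is in tracking the exact degree thresholds so that the two stated lower bounds on $|J|$ in \eqref{eq502} are exactly the ones that make \eqref{eq101} and \eqref{eq102} hold respectively.
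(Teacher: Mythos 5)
Your proposal is correct and follows essentially the same route as the paper: the dimension bound via the evaluation-kernel identity $\dim C(G,P_1,\ldots,P_n)=\dim\mathcal{L}(G)-\dim\mathcal{L}(G-\sum_i P_i)$ plus Riemann--Roch, condition (\ref{eq101}) from injectivity of evaluation on $J$ when $|J|\geq 1+\deg G_1$ (i.e.\ $\mathcal{L}(G_1-\sum_{i\in J}P_i)=\{0\}$), and condition (\ref{eq102}) from $\deg(G_2-\sum_{i\in\overline{J}}P_i)\geq 2g(F)-1$ forcing $P_{\overline{J}}(C_2)=\mathbf{F}_q^{|\overline{J}|}$ and hence $P_{\overline{J}}(C_1)=P_{\overline{J}}(C_2)$. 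The only cosmetic difference is that you make explicit the step for $G_2$ in (\ref{eq101}) and the inclusion argument in (\ref{eq102}), which the paper leaves implicit.
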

\begin{proof}
Equation (\ref{eq201}) follows just from 
\begin{equation}
\dim C(G_1,P_1, \ldots, P_n) = \dim \mathcal{L}(G_1)
- \dim \mathcal{L}(G_1-P_1- \cdots - P_n), \label{eq203}
\end{equation}
and the Riemann-Roch theorem \cite{bn:stichtenoth}
\begin{equation}
\deg G_1 - g(F) + 1 \leq \dim \mathcal{L}(G_1) \leq \max\{0,\deg G_1 + 1\}, \label{eq202}
\end{equation}
where the left inequality of (\ref{eq202}) becomes equality
if
\begin{equation}
\deg G_1 \geq 2g(F) - 1.\label{eq210}
\end{equation}

Firstly we claim that (\ref{eq101}) and (\ref{eq102}) hold if
\begin{eqnarray}
|J| &\geq& 1+\deg G_1, \label{eq204}\\
|\overline{J}| &\leq& \deg G_2 - 2 g(F) + 1. \label{eq205}
\end{eqnarray}
By reordering indices we may assume that
$J = \{1$, \ldots, $|J|\}$.
Observe that
\begin{equation}
P_J(C(G_1,P_1, \ldots, P_n)) = C(G_1,P_1, \ldots, P_{|J|}). \label{eq206}
\end{equation}
If (\ref{eq204}) holds
then by (\ref{eq202}) we have
$\mathcal{L}(G_1 - P_1 - \cdots - P_{|J|}) = \{0\}$,
which means that $\mathcal{L}(G_1)$ is isomorphic to
$C(G_1$, $P_1$, \ldots, $P_{|J|})$ as an $\mathbf{F}_q$-linear
space by (\ref{eq203}).
By the same argument we also see that 
$\mathcal{L}(G_1)$ is isomorphic to
$C(G_1$, $P_1$, \ldots, $P_{n})$.
Thus we have seen that (\ref{eq204}) implies (\ref{eq101}).

If (\ref{eq205}) holds then
\[
\deg (G_2 - P_{|J|+1} - \cdots - P_n) \geq 2 g(F)-1,
\]
which implies by (\ref{eq210})
\begin{equation}
\dim \mathcal{L}(G_2 - P_{|J|+1} - \cdots - P_n) = \deg G_2 - |\overline{J}| - g(F) + 1. \label{eq211}
\end{equation}
By the same argument
\begin{equation}
\dim \mathcal{L}(G_2) = \deg G_2 - g(F) + 1. \label{eq212}
\end{equation}
Equations (\ref{eq203}), (\ref{eq211}) and (\ref{eq212})
imply $\dim C(G_2$, $P_{|J|+1}$, \ldots, $P_n) = |\overline{J}|$,
which in turn implies $C(G_2$, $P_{|J|+1}$, \ldots, $P_n) = \mathbf{F}_q^{|\overline{J}|}$.
Therefore we see that (\ref{eq205}) implies (\ref{eq102}).

Finally noting (\ref{eq502}) $\Rightarrow$ (\ref{eq204}) and (\ref{eq205})
finishes the proof. \qed
\end{proof}

\begin{remark}
As the generalized Reed-Solomon codes is a special case of
AG codes with $g(F)=0$ \cite{bn:stichtenoth},
Section \ref{sec4} can also be deduced from Theorem \ref{thm2}
instead of using \cite[Section 11.\S4]{macwilliams77}.
\end{remark}

\begin{theorem}
We retain notations from Theorem \ref{thm2} and
assume $\deg G_1 < n$.
The number (\ref{eq401}) of qudits in quantum secret that can be decoded
by $J$ is
\begin{equation}
\dim \frac{\mathcal{L}(G_1 - \sum_{j\in \overline{J}} P_j)+\mathcal{L}(G_2)}{(\mathcal{L}(G_1 - \sum_{j\in \overline{J}} P_j)+\mathcal{L}(G_2))
\cap (\mathcal{L}(G_1 - \sum_{j\in J} P_j)+\mathcal{L}(G_2))}.
\label{eq:thm3}
\end{equation}
\end{theorem}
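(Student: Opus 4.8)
The plan is to transport the whole computation from the coordinate spaces $C_1/C_2$ back to the function spaces $\mathcal{L}(G_1)/\mathcal{L}(G_2)$ via the evaluation map, where the kernels appearing in (\ref{eq401}) become transparent. First I would exploit the hypothesis $\deg G_1 < n$: it forces $\deg(G_1 - P_1 - \cdots - P_n) < 0$, hence $\mathcal{L}(G_1 - P_1 - \cdots - P_n) = \{0\}$, so by (\ref{eq203}) the evaluation map $\mathrm{ev}\colon \mathcal{L}(G_1) \to C_1$, $f \mapsto (f(P_1), \ldots, f(P_n))$, is an $\mathbf{F}_q$-linear isomorphism (its kernel is exactly $\mathcal{L}(G_1 - P_1 - \cdots - P_n) = \{0\}$, and its image is $C_1$ by definition). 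Since $\mathrm{ev}(\mathcal{L}(G_2)) = C_2$ and $\mathrm{ev}$ is injective on the larger space $\mathcal{L}(G_1) \supseteq \mathcal{L}(G_2)$, it restricts to an isomorphism $\mathcal{L}(G_2) \cong C_2$, and therefore induces an isomorphism $\mathcal{L}(G_1)/\mathcal{L}(G_2) \cong C_1/C_2$ under which $\widetilde{P}_J$ and $\widetilde{P}_{\overline{J}}$ may be read off.

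The crucial step, and the one I expect to require the most care, is to identify $\ker(\widetilde{P}_{\overline{J}})$ and $\ker(\widetilde{P}_J)$ under this isomorphism. Writing $A = \mathcal{L}(G_1 - \sum_{j \in \overline{J}} P_j) + \mathcal{L}(G_2)$, I would show that a coset $f + \mathcal{L}(G_2)$ lies in $\ker(\widetilde{P}_{\overline{J}})$ exactly when $f \in A$. Indeed, $\mathrm{ev}(f) + C_2 \in \ker(\widetilde{P}_{\overline{J}})$ means $P_{\overline{J}}(\mathrm{ev}(f)) \in P_{\overline{J}}(C_2)$, i.e.\ there is $g \in \mathcal{L}(G_2)$ with $(f-g)(P_j) = 0$ for all $j \in \overline{J}$; since $f - g \in \mathcal{L}(G_1)$ vanishes at those places, $f - g \in \mathcal{L}(G_1 - \sum_{j \in \overline{J}} P_j)$, whence $f = (f-g) + g \in A$. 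The converse is immediate because any $h \in \mathcal{L}(G_1 - \sum_{j \in \overline{J}} P_j)$ evaluates to zero on $\overline{J}$. This gives $\ker(\widetilde{P}_{\overline{J}}) \cong A/\mathcal{L}(G_2)$, and symmetrically $\ker(\widetilde{P}_J) \cong B/\mathcal{L}(G_2)$ with $B = \mathcal{L}(G_1 - \sum_{j \in J} P_j) + \mathcal{L}(G_2)$.

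Finally I would assemble the dimension count using (\ref{eq:vdim}). Since $A$ and $B$ both contain $\mathcal{L}(G_2)$, the standard identity $(A/\mathcal{L}(G_2)) \cap (B/\mathcal{L}(G_2)) = (A \cap B)/\mathcal{L}(G_2)$ gives $\dim(\ker(\widetilde{P}_{\overline{J}}) \cap \ker(\widetilde{P}_J)) = \dim(A \cap B) - \dim \mathcal{L}(G_2)$, while $\dim \ker(\widetilde{P}_{\overline{J}}) = \dim A - \dim \mathcal{L}(G_2)$. Subtracting as in (\ref{eq:vdim}) cancels the $\dim \mathcal{L}(G_2)$ terms and yields
\[
\dim \widetilde{P}_J(\ker(\widetilde{P}_{\overline{J}})) = \dim A - \dim(A \cap B) = \dim \frac{A}{A \cap B},
\]
which is precisely (\ref{eq:thm3}). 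The only genuine obstacle is the kernel identification in the middle paragraph; once the evaluation-map isomorphism is established, the remainder is routine bookkeeping with quotient dimensions.
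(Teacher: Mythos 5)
Your proposal is correct and follows essentially the same route as the paper: both use $\deg G_1 < n$ to make the evaluation map injective, identify $\ker(\widetilde{P}_{\overline{J}})$ with $(\mathcal{L}(G_1-\sum_{j\in\overline{J}}P_j)+\mathcal{L}(G_2))/\mathcal{L}(G_2)$ (and the intersection of kernels with the corresponding intersection quotient), and then subtract dimensions via (\ref{eq:vdim}). No substantive differences.
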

\begin{proof}
Equation (\ref{eq401}) is equal to
\begin{equation}
\dim \ker(\widetilde{P}_{\overline{J}}) - \dim \ker(\widetilde{P}_J) \cap \ker(\widetilde{P}_{\overline{J}}). \label{eq2001}
\end{equation}
Since we assume $\deg G_1 < n$, the evaluation map
$h \in \mathcal{L}(G_1) \mapsto
(h(P_1)$, \ldots, $h(P_n) \in \mathbf{F}_q^n$ is injective and
we can deal with the space of functions
in $\mathcal{L}(G_1)$ to count the dimensions of (\ref{eq2001}).

For $h_1 + \mathcal{L}(G_2) \in \mathcal{L}(G_1)/\mathcal{L}(G_2)$,
its corresponding coset belongs to 
$\ker(\widetilde{P}_{\overline{J}})$ if and only if
there exists $h_2 \in \mathcal{L}(G_2)$ such that
$h_1(P_j) - h_2(P_j) = 0$ for all $j \in \overline{J}$,
which is equivalent to $h_1 - h_2 \in
\mathcal{L}(G_1 - \sum_{j\in \overline{J}}P_j)$.
In other words, the coset $h_1 + \mathcal{L}(G_2)$
satisfies the above condition if and only
if there exists $h'_1 \in \mathcal{L}(G_1 - \sum_{j\in \overline{J}}P_j)$
such that $h_1 \equiv h'_1 \pmod{\mathcal{L}(G_2)}$.
The dimension of space of cosets $h_1 + \mathcal{L}(G_2)$ 
with the above condition is
given by
\begin{equation}
\dim \frac{\mathcal{L}(G_1 - \sum_{j\in \overline{J}} P_j)+\mathcal{L}(G_2)}{\mathcal{L}(G_2)}. \label{eq2002}
\end{equation}

Moreover, while satisfying the condition of the last paragraph,
the coset corresponding to $h_1 + \mathcal{L}(G_2)$
belongs to $\ker(\widetilde{P}_J)$ if and only if
there exists another
$h^{\prime\prime}_1 \in \mathcal{L}(G_1 - \sum_{j\in J}P_j)$
such that $h_1 \equiv h^{\prime\prime}_1 \pmod{\mathcal{L}(G_2)}$.
The dimension of space of cosets $h_1 + \mathcal{L}(G_2)$ 
with the above two conditions is
given by
\begin{equation}
\dim \frac{(\mathcal{L}(G_1 - \sum_{j\in \overline{J}} P_j)+\mathcal{L}(G_2))
\cap (\mathcal{L}(G_1 - \sum_{j\in J} P_j)+\mathcal{L}(G_2))}{\mathcal{L}(G_2)}. \label{eq2003}
\end{equation}
By (\ref{eq2001}), subtracting (\ref{eq2003}) from (\ref{eq2002}) gives
(\ref{eq:thm3}).
\qed
\end{proof}

\section{Conclusion}\label{sec6}
We have shown that
a quantum ramp secret sharing scheme can be constructed from
any nested pair of linear codes,
and also shown necessary and sufficient conditions for
the qualified and the forbidden sets as Theorem \ref{thm1}.
A construction of nested linear codes is given by
the algebraic geometry in Theorem \ref{thm2}.
The following issues are future research agenda.

What is a better construction of  $C_1 \supsetneq C_2$ 
than Theorem \ref{thm2} when $q < n$?
In particular, (\ref{eq205}) should use
both divisors $G_1$ and $G_2$ because 
(\ref{eq101}) and (\ref{eq102}) use both of nested linear codes.
Also, $J$ corresponds to a set of $\mathbf{F}_q$-rational points
on an algebraic curve
when AG codes are used, but only the size of $J$ is taken into
account in (\ref{eq205}).
The geometry of $J$ should also be taken into account.
We shall investigate them in future.

\begin{acknowledgements}
The author would like to thank Profs.\ Ivan Damg\aa rd,
Johan Hansen, Olav Geil, Diego Ruano, and Dr.\  Ignacio Cascudo,
for helpful discussions.
He would also like to thank Prof.\ Tomohiro Ogawa
for clarification of \cite{ogawa05}.
This research is partly supported by the National
Institute of Information and Communications Technology,
Japan, by the Japan Society for the Promotion of Science Grant  
 Nos.\ 23246071 and 26289116,
and the                                                                   
 Villum Foundation through their VELUX Visiting Professor Programme             
 2013--2014.
\end{acknowledgements}

%\bibliographystyle{spmpsci}
%\bibliography{mrabbrev,mybib}

\end{document}